\numberwithin{equation}{section}
\def\cb{{\mathcal B}}
\def\ch{{\mathcal H}}
\def\bc{{\mathbb C}}
\def\br{{\mathbb R}}
\def\gf{{\mathfrak F}}
\def\b{\beta}
        \def\G{\Gamma}
\def\d{\delta}        \def\D{\Delta}
\def\eps{\varepsilon} 
\def\z{\zeta}
\def\l{\lambda}       
\def\m{\mu}
\def\r{\rho}
\def\s{\sigma}
\def\om{\omega}        \def\Om{\Omega}
\def\dss{\displaystyle}
\newtheorem{Thm}{Theorem}[section]
\newtheorem{Prop}[Thm]{Proposition}
\theoremstyle{definition}
\theoremstyle{remark}
\def\di{\mathop{\mathop{\rm d}}\!}
\newcommand{\e}[1]{e^{#1}}
\newcommand{\tr}{\textrm{Tr}}
\newcommand{\ty}[1]{\mathop{\rm {#1}}}
\newcommand{\rmd}{\textrm{d}}
\newcommand{\nn}{\nonumber}
\begin{document}
%
%
%
\title[thermodynamics of quons]
{}
\begin{center}
{\textsc{\textbf{ON THE THERMODYNAMICS OF THE \emph{q}-PARTICLES}}}
\end{center}
\author{Fabio Ciolli}
\address{Fabio Ciolli\\
Dipartimento di Matematica \\
Universit\`{a} di Roma Tor Vergata\\
Via della Ricerca Scientifica 1, Roma 00133, Italy} \email{{\tt
ciolli@mat.uniroma2.it}}
\author{Francesco Fidaleo}
\address{Francesco Fidaleo\\
Institute of Mathematics\\
W\"{u}rzburg University\\
Emil-Fischer-Str$\b$e 40, 97074 W\"{u}rzburg, Germany} \email{{\tt
fidaleo@mat.uniroma2.it}}

\keywords{Thermodynamics of $q$-particles; quons; grand canonical ensemble; grand partition function; Bose Einstein Condensation} 
\subjclass[2000]{82B03, 82A15, 82B30, 82B35.}

\date{}

\begin{abstract}
Since the grand partition function $Z_q$ for the so-called $q$-particles (i.e.,\ quons), $q\in(-1,1)$, cannot be computed by using the standard 2nd quantisation technique involving the full Fock space construction for $q=0$, and its $q$-deformations for the remaining cases, we determine such grand partition functions in order to obtain the natural generalisation of the Plank distribution to $q\in [-1,1]$. We also note the (non) surprising fact that the right grand partition function concerning the Boltzmann case (i.e.,\ $q=0$) can be easily obtained by using the full Fock space 2nd quantisation, by considering the appropriate correction by the Gibbs factor $1/n!$ in the $n$ term of the power series expansion with respect to the fugacity $z$. As an application, we briefly discuss the equations of the state for a gas of free quons or the condensation phenomenon into the ground state, also occurring for the Bose-like quons $q\in(0,1)$.
\end{abstract}

\maketitle

\section{Introduction}

Recently, the investigation of exotic models has been enormously increased with the hope to find some progress in long-standing unsolved problems in the physics of complex models and, in parallel to some other disciplines, relevant for concrete applications like the theory of information.

Concerning the unsolved problems in physics, we certainly mention that, in order to provide a satisfactory mathematical description of the quantum electrodynamics, the latter has predictions obtained via the renormalisation technique, which are in surprisingly perfect accordance with the experiments. For such purposes, the reader is referred to the classical literature 
of the mathematical rigorous approach \cite{Ha,Sc}, and to \cite{CRV,BCRV,BCRV19} for a specific description of quantum electrodynamics.

 The models that aim to study and unify the strong interaction (i.e.,\ quantum chromodynamics) are placed on the same line as the electroweak ones. All such models are called standard and have the same strengths and weaknesses, that is these are in good accordance with the experiments but are not satisfactory from the mathematical point of view. The long-standing problem to unify these three fundamental forces present in nature with the remaining one, that is the gravitation, 
was recently attached with the use of the so-called noncommutative geometry, for example \cite{CL}, is very far from being solved even in a partial~form.

Another direction, indeed connected with applications to the previous questions of quantum field theory, and also involving applications which are certainly relevant for the applied sciences, such as quantum information theory and quantum computing, is the detailed study of the von Neumann entropy and the Araki relative entropy, see for example~\cite{A,CLR,CLRR} for the foundations and recent results. 

It is also worth mentioning various other entropies (i.e.,\ Tsallis entropy \cite{T, T1}),
introduced with the perspective of solving some open problem and to be fruitfully applied to information theory.

Among the  studied models, there are certainly those associated to the so called $q$-particles, or quons, $q\in(-1,0)\bigcup(0,1)$, with the perspective of the extension to the so-called anyons (see e.g.,\ \cite{FMa}, see also \cite{BLW} for a mathematical proposal to manage anyons) corresponding to the case when the parameter $q$ assumes values in some root of the unity, and plektons. 

{
To be more precise, the $q$-deformed particles seem to be related to quantum groups and
quantum algebras, which drew much attention decades ago. Such particles emerge naturally from exactly solvable models in statistical mechanics which acquire the Yang--Baxter equation. 
We also point out that the irreducible representations of $q$-deformed
particles are substantial extensions of the quantum algebra in connections to the braid group statistics.
For such interesting applications of these quons, the reader is referred to the monograph  \cite{BL}.
}

Such exotic $q$-particles are naturally associated to the following commutation relations
{
\begin{equation}
\label{qcra}
{\bf a}_q(f){\bf a}_q^\dagger(g)-q{\bf a}_q^\dagger(g){\bf a}_q(f)=\langle g,f\rangle_\ch I_\ch\,, \quad f,g\in\ch\,,
\end{equation}
$\ch$ being the one-particle space equipped with 
the inner product 
$\langle\, \cdot\, ,\, \cdot\, \rangle_\ch$ which is linear in the first argument, enjoyed by the creators and annihilators acting on the corresponding Fock spaces.

If $\ch=\ell^2(I)$ equipped with the canonical basis given by $e_i(j)=\d_{i,j}$, $i,j\in I$, the relations \eqref{qcra} assume the well-known form (cf.\  \cite{BS}, Section 3)
\begin{equation} 
\label{qcra10}
{\bf a}_{q,i}{\bf a}_{q,j}^\dagger-q{\bf a}_{q,j}^\dagger{\bf a}_{q,i}=\d_{i,j} I_{\ell^2(I)}\,, \quad i,j\in I\,.
\end{equation}   

We also note that there are many other deformed commutation relations similar to~\eqref{qcra10} describing $q$-particles (see e.g.,\ \cite{BL}), limiting our analysis only to the mostly studied commutation relations \eqref{qcra}.
}

The quons can certainly be viewed as an interpolation between particles obeying the Fermi statistics (i.e.,\ $q=-1$) and those obeying the Bose statistics (i.e.,\ $q=1$), passing for the value $q=0$ describing the classical particles, and so obeying the Boltzmann statistics.

{
It appears clear that, for the applications to statistical mechanics, the case $q=0$ would correspond to the classical framework as can easily be deduced from \eqref{qcra1} below, where 
$n_q(\eps)\big|_{q=0}=ze^{-\b \eps}$ corresponding to the occupation number of classical particles of the energy level $\eps$ at inverse temperature $\b$ and activity $z$, the commutation relations~\eqref{qcra10} for $q=0$, ${\bf a}_{i}{\bf a}_{j}^\dagger=\d_{i,j}I$, 
have a crucial meaning in the so-called \emph{free probability}, see for example~\cite{S}. 
}

Therefore, in view of the potential applications outlined before, an intensive investigation of these particles was carried out, a consistent part of which regarded the thermodynamics enjoyed by such particles. We mention just a sample \cite{Gr,W,AK,M,IK} of such papers, and refer the reader to the citations therein for further details.

On the other hand, quickly explained above,  the mathematical investigation of the structure and properties of algebras of operators associated to such $q$-models, mainly for the Boltzmann case $q=0$ (called ``free'' in the operator algebra setting, see e.g.,\ \cite{VDN}), was carried out in an intensive way. As a sample of such papers, we also mention \cite{S, BKS} for the applications to quantum probability. The main object of such a mathematical investigation is the full Fock space $\gf(\ch)\equiv \gf_0(\ch)$ for the free-Boltzmann case $q=0$, and the deformed versions of that, the $q$-deformed Fock spaces 
$\gf_q(\ch)$. 

By coming back to the thermodynamics of very huge systems made of particles of the order of the Avogadro number $N_{\rm A}\sim10^{23}$, the main ingredient is the computation of the grand partition function by using the so-called 2nd quantisation method and the relative Fock spaces, see for example \cite{LL, Hu}. 

If, on one hand, this is perfectly suited for the Bose and Fermi situation by using symmetric and totally anti-symmetric Fock spaces, respectively,
on the other hand,
{the standard ingredient to use the full Fock space $\gf_0(\ch)$ for the Boltzmann case $q=0$ and the deformed versions $\gf_q(\ch)$ of that
for  $q\in(-1,0)\bigcup(0,1)$ fails as it is explained in \cite{W}. 

In fact, for $q=0$ the computation of the grand partition function using the 2nd quantized Hamiltonian $K$ in 
\eqref{gfip0} does not take into account the Gibbs paradox, (cf.\ \cite{Hu}), and thus produces the wrong result,  
the right one being \eqref{gfipp}, which is obtained from \eqref{gfip} by correcting with the factor $1/n!$, which takes into account the fact that particles must be considered indistinguishable, see e.g.,\ \cite{FV} (p. 680).

Equation \eqref{gfip} produces another wrong consequence, that is the grand partition function would be defined only for the values of the activity 
$z=e^{\b\mu}<\z^{-1}$ ($\m$ and $\z$ being the chemical potential and partition function, see {\eqref{gfip}}), 
whereas it is well known that, for classical particles, the activity can assume all values $0<z<+\infty$.    

If instead $q\in (-1,0)\bigcup(0,1)$, any $\gf_q(\ch)$ is the deformed version of $\gf_0(\ch)$, and thus the use of such Fock spaces to compute the grand partition functions produces the paradoxical result (cf.\ \cite{W}) that all of those functions coincide with the grand partition function for $q=0$. 
This would mean that the thermodynamics of the $q$-particles, \mbox{$-1<q<1$}, provided that such exotic particles really exist in nature, does not depend on $q$.    

Summarizing, the use of the standard way to compute the gran partition function through the corresponding Fock spaces  $\gf_q(\ch)$, $q\in (-1,1)$, is totally unusable.      
  
However, while the computation of the grand partition function for the classical case ($q=0$) can be easily achieved by taking into account the Gibbs correction $1/n!$, 
$$
\sum_{n=1}^{+\infty}\frac{\big(\tr\,e^{-\b H}\big)^n}{n!}z^n,\,\,\text{instead of}\,\,\sum_{n=1}^{+\infty}\big(\tr\,e^{-\b H}\big)^nz^n\,,
$$
the remaining case, $q\in (-1,0)\bigcup(0,1)$ cannot be overcome with similar methods because    
it is completely unknown what should be the, necessarily deformed, statistics
to which the quons obey.
}

Therefore, the method arising from the 2nd quantisation and using the deformed Fock space introduced in \cite{BKS} is doomed to fail. 

Instead, 
by using a totally different analysis involving essentially only the commutation relations \eqref{qcra}, it was possible to compute, first in \cite{AF} and then in \cite{FV}, the average population $n(\eps)$ of any energy level $\eps$ of the model describing such $q$-particles, obtaining
\begin{equation}
\label{qcra1}
n(\eps)=\frac{1}{z^{-1}e^{\b\eps}-q}\,,\quad -1\leq q\leq1
\end{equation}
{where, for the sake of simplicity, we are supposing that the degeneracy $g(\eps)$ of the energy of any level $\eps$ is 1.
}

It is noteworthy that, for $q=1$ (and $z=1$ as it regards the quantum harmonic oscillator), $\eps=\hbar\om$ and $\b=1/k_B T$, the above formula reduces to the celebrated formula $n=\frac1{e^{\frac{\hbar\om}{k_B T}}-1}$, solving the long standing paradox concerning the 
black-body radiation, for which M. Planck was awarded by the Nobel prize.

{
The formula \eqref{qcra1}, as well as \eqref{occq} below, is standardly recovered for the Bose/Fermi case from $\tr(e^{-\b K_{\pm 1}})$, where $K_{\pm 1}=\di\G_{\pm 1}(H)- \m N$ and $N$ the number operator, by using the well-known formula
$$
 -\frac1{\b}\frac{\partial\,\,}{\partial\eps}\ln\tr\big(e^{-\b (\di\G_{\pm 1} (H)-\mu N)}\big)=n_{\pm 1}(\eps)\,.
$$
Concerning the remaining cases $q\in (-1,1)$, the previous recipe cannot be applied for the consideration just listed above. 

However, \eqref{qcra1} and \eqref{occq} are rigorously recovered, first in \cite{AF} and then in \cite{FV}. Indeed, in Section 5 of \cite{AF}, Equation \eqref{qcra1} is demonstrated by using the continuous analogous of the commutation relations \eqref{qcra10} and imposing the Kubo--Martin--Schwinger (KMS for short) condition. 
In \cite{FV} instead, \eqref{occq} is proved by maximising the $q$-entropy functional, suitably introduced there, by using the Lagrange multipliers method.     
}

Therefore, in the present paper, we start from \eqref{qcra1} to compute the grand partition ${\ty{Z}}_q$ for all values $q\in[-1,1]$, by imposing that
$-\frac1{\b}\frac{\partial\ln{\ty{Z}}_q}{\partial\eps}=n(\eps)$, as prescribed by the statistical mechanics. Such grand partition functions are uniquely determined by an inessential multiplicative constant, and allow us to perform all standard thermodynamic computations.

{ A quite interesting fact is that the phenomenon of condensation of non negligible amount of particles in the fundamental state indeed occurs also and only for $q\in(0,1]$. 
This was demonstrated in a more general context involving the grand canonical ensemble, in~\cite{AF}, Section 5, by using  only \eqref{qcra} and imposing the KMS condition, in the mathematical setting of the distributions. 

This allows to call $q$-particles, $q\in(0,1)$, as \emph{Bose-like particles} and \emph{Fermi-like particles} for $q\in(-1,0)$. The separation point $q=0$ corresponds to the Boltzmann (i.e.,\ classical) case, where it is well known that the condensation does not occur. 

Concerning the Fermi-like particles, we note that it seems to be totally meaningless to argue about an analogous ``Pauli exclusion principle'', simply because it is still unclear what is the statistics to which the quons obey.

}

In view of possible applications, we recover the equation of state for the perfect gas of $q$-particles, and discuss the occurrence of this condensation for Bose-like quons by recovering in a different way the formula \eqref{denscrit} that already appeared in \cite{AF}, Equation (5.6).

\section{The Grand Partition Function for Fermi, Bose and Boltzmann Models}
\label{1mic1}

We start from a system whose Hamiltonian $H$ is a positive selfadjoint operator with compact resolvent, acting on a separable Hilbert space $\ch$, called the \emph{one-particle space}.

In such a situation, the spectrum $\s(H)=\{\eps_i\}$ is made by isolated points, accumulating at $+\infty$ if $\ch$ is infinite dimensional. In addition, we denote by $g_i$ the (necessarily finite) \emph{multiplicity}, that is the \emph{degeneracy},  of each eigenvalue $\eps_i$. Summarising,
\begin{equation}
\label{rcha}
H=\sum_{\eps_i\in\s(H)}\eps_i P_{\eps_i},\,\,\text{and}\,\,g_i:=\text{dim}\left(\text{Ran}(P_{\eps_i})\right)<\infty\,.
\end{equation}

We also suppose that at any inverse temperature $\b:=1/k_B T$, 
$k_B\approx 1.3806488 \times 10^{-23}\, {\rm J K}^{-1}$ being the Boltzmann constant, $e^{-\b H}$ is trace-class and
define the {\it partition function}
$\z:=\ty{Tr}\!\left(e^{-\b H}\right)$. 

Concerning the \emph{grand partition function} $\ty{Z}$, it comes by considering open systems in thermodynamic equilibrium at inverse temperature $\b$ and {\it chemical potential} $\m$. It is  computed as standard for Fermi and Bose particles with the use of the symmetric and totally anti-symmetric (due to Pauli exclusion principle) Fock spaces $\gf_{\pm}(\ch)$, see \cite{BR}, Section 5.2.1. Indeed,
\begin{equation*}
\ty{Z}=\left\{
\begin{array}{lll}
e^{\dss -\tr\ln\left(I-e^{\b\m}e^{-\b H}\right)},& \m<\min\s(H) &\text{(Bose)}\,,\\
e^{\dss \tr\ln\left(I+e^{\b\m}e^{-\b H}\right)}, & \m\in\br &\text{(Fermi)}\,.
\end{array}
\right.
\end{equation*}
We indicate such grand partition functions as $\ty{Z}_{\pm 1}$, where $\pm 1$ correspond to the Bose/Fermi alternative. 
\begin{Prop}
\label{BFgp}
For $\ty{Z_{\pm 1}}$, we have the estimate
\begin{equation*}
\begin{array}{ll}
\ty{Z_{-1}}\leq e^{\dss \z e^{\b\m}}\,,&\m\in\br\,,\\
\ty{Z_1}\leq e^{\dss \left( \frac{\z e^{\b\m}}{ 1-e^{\b(\m-\min\s(H))}}\right)}\,,&\m<\min\s(H)\,.\\
\end{array}
\end{equation*}
\end{Prop}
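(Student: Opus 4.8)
The plan is to reduce both estimates to elementary scalar inequalities applied term by term to the spectral decomposition \eqref{rcha}. Writing $\eps_0:=\min\s(H)$ and expanding the traces over the eigenvalues of $H$ with multiplicities $g_i$, one has
\[
\tr\ln\big(I+e^{\b\m}e^{-\b H}\big)=\sum_i g_i\ln\big(1+e^{\b\m}e^{-\b\eps_i}\big),\qquad
-\tr\ln\big(I-e^{\b\m}e^{-\b H}\big)=\sum_i g_i\big(-\ln(1-e^{\b\m}e^{-\b\eps_i})\big),
\]
where in the Bose case the hypothesis $\m<\eps_0$ is exactly what guarantees $0\le e^{\b\m}e^{-\b\eps_i}\le e^{\b(\m-\eps_0)}<1$ for every $i$, so that all the logarithms are well defined; the interchange of $\tr$ with these convergent, non‑negative series is legitimate because $e^{-\b H}$ is trace‑class, i.e.\ $\z=\sum_i g_i e^{-\b\eps_i}<\infty$.

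For the Fermi case I would simply invoke $\ln(1+x)\le x$ for $x\ge0$, which gives $\tr\ln\big(I+e^{\b\m}e^{-\b H}\big)\le\sum_i g_i e^{\b\m}e^{-\b\eps_i}=\z e^{\b\m}$; exponentiating yields $\ty{Z_{-1}}\le e^{\z e^{\b\m}}$ for all $\m\in\br$.

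For the Bose case, set $r:=e^{\b(\m-\eps_0)}\in[0,1)$ and use that for $0\le x\le r$ one has, from $-\ln(1-x)=\sum_{k\ge1}x^k/k\le\sum_{k\ge1}x^k=\tfrac{x}{1-x}$, the bound $-\ln(1-x)\le\tfrac{x}{1-x}\le\tfrac{x}{1-r}$. Applying this to $x=e^{\b\m}e^{-\b\eps_i}$ and summing over $i$ gives $-\tr\ln\big(I-e^{\b\m}e^{-\b H}\big)\le\frac{1}{1-r}\sum_i g_i e^{\b\m}e^{-\b\eps_i}=\frac{\z e^{\b\m}}{1-e^{\b(\m-\min\s(H))}}$, and exponentiation produces the asserted bound on $\ty{Z_1}$.

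There is no genuine obstacle here: the argument is a two‑line chain of scalar estimates once the spectral expansion is in place. The only points deserving a word of care are the term‑by‑term expansion of the traces (justified by trace‑class‑ness and positivity of every summand) and the observation that $\m<\min\s(H)$ is precisely the condition making $r<1$, so that both $\ty{Z_1}$ and its upper bound are finite.
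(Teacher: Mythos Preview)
Your proof is correct and is essentially the approach the paper has in mind: the paper simply defers to \cite{BR}, Propositions 5.2.22 and 5.2.27, and those arguments are precisely the termwise spectral expansion together with the scalar bounds $\ln(1+x)\le x$ and $-\ln(1-x)\le x/(1-x)$ that you spell out. Indeed, the paper itself later invokes ``the easy inequality $\ln(1+x)\leq x$'' in the proof of Theorem~\ref{Qgp}, confirming that your route matches the intended one.
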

\begin{proof}
Those are nothing other than the proofs of \cite{BR}, Proposition 5.2.22 (Fermi case) and Proposition 5.2.27 (Bose case) respectively.
\end{proof}
Under our assumptions, it is easy to recognise that the partition function is a smooth function of $\b$ and, once  all the eigenvalues of $H$ have been fixed but $\eps_{i_o}$, it is also a smooth function of each $\eps_{i_o}$.

By Proposition \ref{BFgp}, the grand partition functions are well defined whenever
the chemical potential $\m$ is arbitrary (Fermi case), or strictly less than the first eigenvalue $\min\s(H)$ of the Hamiltonian (Bose case).
As for the partition functions, the grand partition functions are smooth functions of the parameters $\b,\m$ and of the eigenvalues of $H$. Such smooth dependences on parameters allow us to compute many thermodynamic functions, see e.g.,~\cite{Hu}.
We often omit to indicate such dependences for the sake of simplicity.
\smallskip

The Boltzmann case is very particular because, in the Boltzmann statistics, the Gibbs paradox (e.g.,\ \cite{Hu}) naturally emerges, and thus we should suitably correct the statistical weights $w_{\eps_i}$, e.g.,\ \cite{FV} (p. 680).

In this situation, it might be natural to use the the so-called full Fock space $\gf(\ch)$ and the \emph{grand canonical Hamiltonian} $K:=\rmd\G(H)-\m N$, being $\rmd\G(H)$ the {\it second quantized} of the operator $H$ and 
$N$ the {\it number operator}, as for the computation of $\ty{Z}_{\pm 1}$ in the Bose and Fermi cases, see e.g.,\, \cite{BR}, Section 5.2.1. 
This computation is precisely what was done in \cite{W}, Formula (11), obtaining 
\begin{equation}
\label{gfip0} 
\tr\left(e^{-\b K}\right)=\frac{1}{1-\z e^{\b\m}}\,,
\end{equation}
holding true again for $\m<\min\s(H)$.

As we have already explained, such a formula is unrealistic for several reasons, the main one being
that the grand partition function for the Boltzmann statistics should be defined for any value of the chemical potential. 
However, as we will show below (see also~\cite{Hu}, Section 7.3), the correct formula should be
$\ty{Z_0}=e^{\z e^{\b\m}}$. 

We would like to note that, defining the \emph{fugacity}, or \emph{activity}, by $z:=e^{\b\m}>0$, we have~that
\begin{equation}
\label{gfip}
\tr\left(e^{-\b K}\right)=\sum_{n=0}^{+\infty}\z^n z^n\,,\quad  \z z<1\,.
\end{equation}

It is interestingly seen that, if one corrects \eqref{gfip} with the weight $n!$ in the denominator of all addenda of the series to avoid the Gibbs paradox, we obtain
\begin{equation}
\label{gfipp}
\sum_{n=0}^{+\infty}\frac{\z^n}{n!}z^n=e^{\z z}=\ty{Z_0}\,.
\end{equation}
 
It is also customary to express the grand partition function in terms of the  fugacity $z$,~obtaining
\begin{equation}
\label{gfipo}
\left\{
\begin{array}{lll}
\ty{Z_1}=e^{\dss-\tr\ln\left(I-ze^{-\b H}\right)}, &z<e^{\b\min\s(H)} & \text{(Bose)}\,, \\
\ty{Z_0}=e^{\dss z\tr\left(e^{-\b H}\right)},&z>0 & \text{(Boltzmann)}\,, \\
\ty{Z_{-1}}=e^{\dss\tr\ln\left(I+ze^{-\b H}\right)}, &z>0 & \text{(Fermi)}\,.
\end{array}
\right.
\end{equation}

Now, another incongruence immediately emerges concerning the computation of the grand partition function for the Boltzmann case. Indeed, suppose for simplicity that $\ch$ is finite dimensional, so all involved traces are finite sums. 

If one approximates the 1st and the 3rd lines of \eqref{gfipo} for $z\to0$, that is in the  low-density regime, we correctly obtain 
$\ty{Z_1}\approx\ty{Z_0}\approx\ty{Z_{-1}}$ because the Bose and Fermi distributions should coincide with the Boltzmann one for $z\approx0$. Therefore, the grand partition function for $q=0$ cannot have the form \eqref{gfip0}.

\section{The Grand Partition Function of Quons: One-Mode}

To discover what the grand partition function for quons should be, we first note that the computations in \cite{W} provide the wrong result for $q=0$. 
In addition, if one uses the grand canonical Hamiltonian $K=\rmd\G(H)-\m N$ acting on the $q$-deformed Fock space $\gf_q(\ch)$ (e.g.,\ \cite{BKS}), one obtains again \eqref{gfip0}, independently of $q\in(-1,1)$.
\smallskip

On the other hand, it was discovered, first in \cite{AF}, Section 5, for the grand canonical ensemble, and then in \cite{FV}, Section 2, using the microcanonical ensemble, that the generalisation to $q\in(-1,1)$ of the Planck distribution of the occupation numbers at inverse temperature $\b$ and fugacity $z$ is
\begin{equation}
\label{occq}
n_q(\eps)=g(\eps)\frac{1}{z^{-1}e^{\b\eps}-q}\,,\quad q\in[-1,1]\,,\,\, \eps\in\s(H)\,.
\end{equation}
Here, $g(\eps)$ is the degeneracy of the level $\eps$.
{
For the cases treated in Section \ref{continuu}, where the Hamiltonian is proportional to the opposite of the Laplacian $-\D$ acting on $L^2(\br^d)$, the degeneracy $g(\eps)$ is absorbed in an appropriate integral after passing to the continuum. 
}

The simplest way to try to compute such a grand partition function is to consider the so-called \emph{one-mode model} and, since it is connected with the so-called
\emph{$q$-oscillator} (e.g.,~\cite{EM}), we have $\ch=\bc$ and $H=\hbar\om$. 
It should be noticed that \eqref{occq} also appeared  in \cite{Good}, (A.5), apparently computed by using the \emph{q-numbers}
$[n]_q$  (see also e.g.,\ \cite{BKS}). Unfortunately, it is unclear how (A.5) in \cite{Good} is derived and, in addition, any computation of the 
$q$-grand partition function using the $[n]_q$ numbers 
does not reproduce \eqref{occq}.

Our approach in computing the grand partition function in one-mode model proceeds as follows, by taking into account that the degeneracy is obviously $1$. We start with the well-known formula in the one-mode, holding for values of the fugacity previously~determined,
\begin{equation}
\label{occq0}
n=z\frac{\partial\ln{\ty{Z}}_q}{\partial z}\,.
\end{equation}

Combining \eqref{occq} and \eqref{occq0}, we obtain
\begin{equation}
\label{occq1}
\frac{\partial\ln{\ty{Z}}_q}{\partial z}=\left\{
\begin{array}{lll}
e^{-\b\hbar\om},& z>0 & q=0\,, \\
\frac{\dss  1}{\dss e^{\b\hbar\om}-qz},& z>0  &-1<q<0\,,\\
\frac{\dss 1}{\dss e^{\b\hbar\om}-qz}, & 0<z<\frac{\dss e^{\b\hbar\om}}{\dss q} & 0<q<1\,.
\end{array}
\right.
\end{equation}

Integrating both members of \eqref{occq1}, and neglecting the inessential multiplicative constant, we obtain $\ty{Z_0}$ and $\ty{Z_{\pm1}}$  in \eqref{gfipo}.
The general case (i.e.,\ the multi-mode case) involving the computation of ${\ty{Z}}_q$ will be handled in the forthcoming section, obtaining \eqref{occq3}. 

\section{The Grand Partition Function of Quons}
\label{gpfqmm}

The present section is devoted to the general case of the multi-mode model  (i.e.,\ $\dim(\ch)>1$) 
described by the Hamiltonian in \eqref{rcha}, by following the previous suggestions.

Indeed, for $q\in(-1,1)\smallsetminus\{0\}$, define
\begin{equation}
\label{occq3}
{\ty{Z}}_q:=e^{\dss -\frac{\dss \tr\ln\left(I-zqe^{\dss -\b H}\right)}{\dss q}},\,\, 
\left\{\begin{array}{ll}
0<z<\frac{\dss e^{\dss \b\min\s(H)}}{\dss q}\,\,& 0<q<1\,, \\
{}&\\
z>0\,\,& -1<q<0\,.
\end{array}\right.
\end{equation}

We note that:
\begin{equation*}
{\ty{Z}}_q(z):=
\left\{\begin{array}{ll}
\left(\ty{Z_1}(zq)\right)^\frac{1}{q},\,\, &0<q<1\,, \\
\left(\ty{Z_{-1}}(z|q|)\right)^\frac{1}{|q|},\,\, &-1<q<0\,,
\end{array}\right.
\end{equation*}
and thus all such grand partition functions are well defined in their own domain involving the activity $z$, and we have the following estimate for the
${\ty{Z}}_q$.
\begin{Thm} 
\label{Qgp}
For ${\ty{Z}}_q$, $q\in[-1,1]$, we have the estimate
\begin{equation*}
\begin{array}{lll}
{\ty{Z}}_q\leq e^{\dss \z z}\,,& 0<z& q\in[-1,0]\,,\\
{\ty{Z}}_q\leq e^{\dss \left(\frac{\z zq}{ 1-zqe^{-\b\min\s(H)}}\right)}\,,& 0<z<\frac{\dss e^{\b\min\s(H)}}{\dss q}&  q\in(0,1]\,.
\end{array}
\end{equation*}
\end{Thm}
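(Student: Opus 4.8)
The plan is to reduce both estimates to Proposition \ref{BFgp} by means of the two identities recorded immediately above the statement, namely ${\ty{Z}}_q(z)=\big(\ty{Z_1}(zq)\big)^{1/q}$ for $0<q<1$ and ${\ty{Z}}_q(z)=\big(\ty{Z_{-1}}(z|q|)\big)^{1/|q|}$ for $-1<q<0$. These follow at once by comparing \eqref{occq3} with the Bose and Fermi expressions in \eqref{gfipo}; the single point to check is that the range of $z$ prescribed in \eqref{occq3} coincides with the range for which $\ty{Z_{\pm1}}$ is being evaluated at an admissible activity: for $0<q<1$ the condition $z<e^{\b\min\s(H)}/q$ is precisely $zq<e^{\b\min\s(H)}$, whereas for $-1<q<0$ there is no restriction, matching the Fermi case. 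The special values are disposed of directly: $q=0$ gives ${\ty{Z}}_0=e^{\z z}$ by \eqref{gfipo}, so the first line holds with equality, and $q=\pm1$ is Proposition \ref{BFgp} itself, rewritten through $z=e^{\b\m}$.

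For $-1<q<0$ I would apply the Fermi bound of Proposition \ref{BFgp} at activity $z|q|$, obtaining $\ty{Z_{-1}}(z|q|)\le e^{\z z|q|}$, and then raise to the power $1/|q|$: the factor $|q|$ in the exponent cancels and one is left with ${\ty{Z}}_q(z)\le e^{\z z}$, which is the first line. For $0<q<1$ I would apply the Bose bound of Proposition \ref{BFgp} at activity $zq$, which is legitimate exactly on the stated domain, obtaining $\ty{Z_1}(zq)\le\exp\!\big(\z zq/(1-zqe^{-\b\min\s(H)})\big)$; raising to the power $1/q$ cancels the $q$ in the numerator and yields $\exp\!\big(\z z/(1-zqe^{-\b\min\s(H)})\big)$, which together with the endpoint $q=1$ gives the second estimate.

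Alternatively, and somewhat more self-containedly, one may bypass the factorisation and estimate the exponent in \eqref{occq3} directly. For $q>0$ write $-\tfrac1q\tr\ln\big(I-zqe^{-\b H}\big)=-\tfrac1q\sum_i g_i\ln\big(1-zqe^{-\b\eps_i}\big)$, apply the elementary inequality $-\ln(1-x)\le x/(1-x)$ with $x=zqe^{-\b\eps_i}\in[0,1)$, bound $1-zqe^{-\b\eps_i}\ge1-zqe^{-\b\min\s(H)}$, and sum using $\sum_i g_ie^{-\b\eps_i}=\z$; for $q<0$ use instead $I-zqe^{-\b H}=I+z|q|e^{-\b H}$ together with $\ln(1+x)\le x$. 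This again gives $e^{\z z}$ for $q\in[-1,0]$ and $\exp\!\big(\z z/(1-zqe^{-\b\min\s(H)})\big)$ for $q\in(0,1]$, and the same computation applied to \eqref{gfipo} re-derives Proposition \ref{BFgp}.

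I do not anticipate a real obstacle: the proof is in essence an exercise in tracking the admissible activities under the rescaling $w=z|q|$ and in checking that the estimate is consistent at $q=0,\pm1$ and in the limit $q\to0$, where both right-hand sides must and do collapse to $e^{\z z}$. If anything is mildly delicate, it is the verification of the factorisation identities with the precise matching of the $z$-domains; once this is secured, the estimate follows from Proposition \ref{BFgp} purely by the arithmetic of exponents.
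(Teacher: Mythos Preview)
Your approach is exactly the paper's: reduce to Proposition~\ref{BFgp} via the factorisation ${\ty{Z}}_q(z)=\big(\ty{Z_{\pm1}}(z|q|)\big)^{1/|q|}$, with the alternative elementary route through $\ln(1+x)\le x$ and $-\ln(1-x)\le x/(1-x)$ also mentioned there. The handling of the endpoints $q=0,\pm1$ and of the $z$-domains is correct.

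One discrepancy you should flag explicitly rather than pass over: for $0<q<1$ your computation gives the exponent $\dfrac{\z z}{1-zqe^{-\b\min\s(H)}}$, not $\dfrac{\z zq}{1-zqe^{-\b\min\s(H)}}$ as printed in the statement. Your version is the correct one. The printed bound is in fact false for small $q>0$: since $\ln{\ty{Z}}_q\to\z z$ as $q\to0^+$ while $\z zq/(1-zqe^{-\b\min\s(H)})\to0$, the stated inequality cannot hold. Your bound also agrees with Proposition~\ref{BFgp} at $q=1$ and with the first line at $q\to0$, so it is the intended estimate; the extra $q$ in the numerator is a typo in the paper that its own (terse) proof does not detect.
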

\begin{proof}
For $q=0$, the first inequality saturates to the Boltzmann case. For the negative cases $q\in[-1,0)$, we use the 2nd line in \eqref{occq3} and
Proposition \ref{BFgp} (or take advantage of the easy inequality
$\ln(1+x)\leq x$), obtaining the 1st row. For the case $q\in(0,1]$, we use the 1st line in \eqref{occq3}, obtaining the 2nd row with the bound for the activity $z$ by taking into account Proposition \ref{BFgp}.
\end{proof}

The definition of ${\ty{Z}}_q$ in \eqref{occq3} is justified by the following
\begin{Prop}
\label{PropN}
For each $\eps\in\s(H)$ fixed,  ${\ty{Z}}_q$ with $q\in[-1,1]$, is derivable w.r.t. $\eps$ and we have
$$
-\frac1{\b}\frac{\partial\ln{\ty{Z}}_q}{\partial\eps}=n_q(\eps)\,,
$$
where the $n_q(\eps)$ are the occupation numbers in \eqref{occq}. 
\end{Prop}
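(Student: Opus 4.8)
The plan is to reduce the statement to a one-line computation by diagonalising $H$. Fix an eigenvalue $\eps=\eps_{i_0}\in\s(H)$ with degeneracy $g:=g(\eps)$ and, as anticipated in Section \ref{1mic1}, regard ${\ty{Z}}_q$ as a function of this single variable with all the other eigenvalues held fixed. Since $e^{-\b H}$ is trace-class, the spectral decomposition \eqref{rcha} gives $\tr\ln\!\left(I-zqe^{-\b H}\right)=\sum_{\eps_i\in\s(H)}g_i\ln\!\left(1-zqe^{-\b\eps_i}\right)$, and on the admissible activity domains of \eqref{occq3} every factor $1-zqe^{-\b\eps_i}$ is bounded away from $0$ (for $0<q<1$ this is exactly the role of the constraint $z<e^{\b\min\s(H)}/q$; for $-1<q<0$ it is automatic because $zq<0$). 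Hence the series converges absolutely, and only the single summand indexed by $i_0$ depends on $\eps$, so that for $q\in(-1,1)\setminus\{0\}$
\[
\ln{\ty{Z}}_q=-\frac1q\Big(g\,\ln\!\left(1-zqe^{-\b\eps}\right)+\text{(terms independent of }\eps)\Big),
\]
which is manifestly smooth in $\eps$.

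Next I would simply differentiate the $i_0$-th term:
\[
\frac{\partial\ln{\ty{Z}}_q}{\partial\eps}=-\frac1q\,g\,\frac{\b zq\,e^{-\b\eps}}{1-zqe^{-\b\eps}}=-\b\,g\,\frac{z e^{-\b\eps}}{1-zqe^{-\b\eps}}\,,
\]
so that, multiplying numerator and denominator by $z^{-1}e^{\b\eps}$,
\[
-\frac1\b\frac{\partial\ln{\ty{Z}}_q}{\partial\eps}=g\,\frac{1}{z^{-1}e^{\b\eps}-q}=n_q(\eps)\,,
\]
which is exactly \eqref{occq}. The Boltzmann case $q=0$ is treated separately (and it is also the limit $q\to0$ of the above): from $\ln{\ty{Z}}_0=z\,\tr\!\left(e^{-\b H}\right)=z\sum_ig_ie^{-\b\eps_i}$ one gets $-\frac1\b\,\partial_\eps\ln{\ty{Z}}_0=g\,ze^{-\b\eps}=n_0(\eps)$. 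Finally, the endpoints $q=\pm1$ are the classical Bose and Fermi identities recalled in the Introduction; they are also the $q=\pm1$ instances of the very same computation, since \eqref{occq3} specialises to $\ty{Z}_{\pm1}$ of \eqref{gfipo} at $q=\pm1$.

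There is no deep obstacle here; the only point deserving a word of care is the legitimacy of isolating the $i_0$-th summand while treating the remainder of the (possibly infinite) series as a constant in $\eps$. This is justified because the series defining $\tr\ln\!\left(I-zqe^{-\b H}\right)$ converges absolutely and locally uniformly in $\eps$ on the relevant domain, a fact already implicit in the estimates of Theorem \ref{Qgp}. One should also note that, by \eqref{rcha}, $\eps_{i_0}$ is an isolated point of $\s(H)$, so a genuine one-variable derivative at $\eps_{i_0}$ makes sense (small perturbations keep $\eps_{i_0}$ isolated and do not collide with neighbouring eigenvalues), which is what makes the assertion ``${\ty{Z}}_q$ is derivable w.r.t. $\eps$'' meaningful.
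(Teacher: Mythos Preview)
Your proof is correct and follows essentially the same route as the paper: both arguments diagonalise $H$ via \eqref{rcha}, isolate the single summand carrying the chosen eigenvalue $\eps_{i_0}$ (the paper does this by introducing the perturbed Hamiltonian $H_\eps(x)=xP_{\eps_o}+H_{\eps_o}$ on a small interval $I_{\eps_o}$ around $\eps_o$), and then differentiate that one-variable term, treating $q=0$ separately. Your additional remarks on absolute convergence and on why the derivative at an isolated eigenvalue makes sense spell out points the paper leaves implicit, but the underlying computation is identical.
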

\begin{proof}
For $q=0$, it is enough to show that the partition function $\z$ admits partial derivatives w.r.t.\ any $\eps\in\s(H)$ in a suitable neighbourhood. 

Indeed, for $\eps_o\in\s(H)$, 
and for $x$ in the non empty open interval $I_{\eps_o}$ centered in $\eps_o$, such that $I_{\eps_o}\bigcap\big(\s(H)\smallsetminus\{\eps_o\}\big)=\emptyset$, which always exists as $\s(H)$ is discrete, define
$$
H_\eps(x):=xP_{\eps_o}+H_{\eps_o}=xP_{\eps_o}+\sum_{\eps\neq\eps_o}\eps P_{\eps}\,.
$$

It is clear that $\z=\z(x)$ as a function of $x$ around $\eps_o$, is equal to
$$
\z(x)=g(\eps_o)e^{-\b x}+\tr\big(e^{-\b H_{\eps_o}}\big)\,,
$$
where the 2nd piece in the r.h.s.\ does not depend on $x$.

Therefore, $\z(x)$ is smooth in the neighbourhood $I_{\eps_o}$ and
$$
\frac{\partial\z}{\partial\eps_o}=\frac{\di\z(x)}{\di x}\Big|_{x=\eps_o}=-\b g(\eps_o)\e{-\b\eps_o}\,.
$$
We then get $-\frac1{\b}\frac{\partial\ln\ty{Z_0}}{\partial\eps}=n_0(\eps)$. 

The computation for the remaining $q$ is similar. Indeed, suppose $q\neq0$, then
\begin{align*}
\ln{\ty{Z}}_q(x)&=-\frac{\tr\big(\ln(I-zqe^{-\b H})\big)}{q}\\
&=-\frac{g(\eps_o)}{q}\ln(1-zqe^{-\b x})-\frac1{q}\tr\big(\ln(I-zqe^{-\b H_{\eps_o}})\big)
\end{align*}
where, as before, the 2nd piece in the r.h.s.\ does not depend on $x$. Then
\[
-\frac1{\b}\frac{\partial \ln{\ty{Z}}_q}{\partial\eps_o}=-\frac1{\b}\frac{\di\ln{\ty{Z}}_q(x)}{\di x}\Big|_{x=\eps_o}=n_q(\eps_o)\,.
\]
\end{proof}

To conclude with the grand partition function of the $q$-particles, we get the following result about the convergence of ${\ty{Z}}_q\equiv {\ty{Z}}_q(\b,z)$ to $\ty{Z_0}(\b,z)$.

\begin{Prop}
For the $q$-grand partition function ${\ty{Z}}_q$, we get
$$
\lim_{q\to 0} {\ty{Z}}_q (\b,z)=\ty{Z_0}(\b,z)\,,
$$
where the convergence is uniform in the variables $\b, z$ in all closed strips $[\b_o,+\infty)\times[0,\d]$,
where $\b_o>0$ and $\d<e^{\b_o\min\s(H)}$.
\end{Prop}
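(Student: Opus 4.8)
The plan is to reduce the statement to a uniform bound on the logarithms and then exponentiate. Fix a strip $S:=[\b_o,+\infty)\times[0,\d]$ with $\b_o>0$ and $\d<e^{\b_o\min\s(H)}$, and put $\theta:=\d\,e^{-\b_o\min\s(H)}\in[0,1)$. Since $\eps_i\ge\min\s(H)\ge0$ and $\b\ge\b_o$, we get $0\le ze^{-\b\eps_i}\le\theta$ for every $\eps_i\in\s(H)$ and every $(\b,z)\in S$; hence, for $0<|q|<1$, the operator $I-zqe^{-\b H}$ is positive and invertible with $\|zqe^{-\b H}\|\le\theta|q|<1$, so ${\ty{Z}}_q$ is well defined on the whole of $S$ and, through the spectral decomposition $H=\sum_i\eps_iP_{\eps_i}$, $g_i=\dim\mathrm{Ran}(P_{\eps_i})$, we may write
\[
\ln{\ty{Z}}_q=-\frac1q\sum_i g_i\ln\!\big(1-zqe^{-\b\eps_i}\big),\qquad \ln\ty{Z_0}=z\,\tr(e^{-\b H})=z\sum_i g_i e^{-\b\eps_i}.
\]

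Next I would expand the logarithm. Setting $t_i:=ze^{-\b\eps_i}\in[0,\theta]$, the bound $|zqe^{-\b\eps_i}|=|q|t_i<1$ legitimises $-\frac1q\ln(1-zqe^{-\b\eps_i})=\sum_{n\ge1}q^{n-1}t_i^{\,n}/n$, whose first term is exactly $t_i$; using $1/n\le1/2$ for $n\ge2$ and summing the resulting geometric series,
\[
\Big|-\tfrac1q\ln\!\big(1-zqe^{-\b\eps_i}\big)-t_i\Big|\le\frac{|q|\,t_i^{\,2}}{2\,(1-|q|t_i)}\le\frac{|q|\,t_i^{\,2}}{2\,(1-\theta)}\qquad(0<|q|<1).
\]
Summing against the weights $g_i$, then using $t_i^{\,2}\le\theta\,t_i$ and $\sum_i g_i t_i=z\,\tr(e^{-\b H})\le\d\,\tr(e^{-\b_o H})<\infty$ — here the trace-class hypothesis on $e^{-\b H}$ and the monotonicity $\tr(e^{-\b H})\le\tr(e^{-\b_o H})$ for $\b\ge\b_o$ enter — yields the uniform estimate
\[
\big|\ln{\ty{Z}}_q-\ln\ty{Z_0}\big|\le\frac{\theta\,\d\,\tr(e^{-\b_o H})}{2\,(1-\theta)}\,|q|\qquad\text{for }(\b,z)\in S.
\]

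Finally I would exponentiate. On $S$ we have $0\le\ln\ty{Z_0}=z\,\tr(e^{-\b H})\le\d\,\tr(e^{-\b_o H})=:M$, so for $|q|$ small the penultimate display confines $\ln{\ty{Z}}_q$ to the bounded interval $[-1,M+1]$, on which $t\mapsto e^t$ is Lipschitz with constant $e^{M+1}$; hence
\[
\big|{\ty{Z}}_q-\ty{Z_0}\big|\le e^{M+1}\big|\ln{\ty{Z}}_q-\ln\ty{Z_0}\big|\le\frac{e^{M+1}\,\theta\,\d\,\tr(e^{-\b_o H})}{2\,(1-\theta)}\,|q|\longrightarrow0
\]
as $q\to0$, uniformly in $(\b,z)\in S$, which is the assertion. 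The only genuine difficulty is making every estimate uniform over the \emph{unbounded} strip, and this is exactly what the hypothesis $\d<e^{\b_o\min\s(H)}$ is for: it produces a single geometric ratio $\theta<1$ simultaneously controlling all the logarithmic series (and guarantees ${\ty{Z}}_q$ is even defined on $S$ for small $|q|$), while the integrability of $e^{-\b_o H}$ and the monotonicity of $\tr(e^{-\b H})$ in $\b$ bound $\sum_i g_i t_i$ by a point-independent constant; everything else is the elementary $O(|q|)$ tail estimate above.
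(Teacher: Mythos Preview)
Your argument is correct and follows essentially the same route as the paper: both prove uniform convergence of $\ln{\ty{Z}}_q$ to $\ln\ty{Z_0}$ by expanding $-\frac1q\ln(1-zqe^{-\b\eps_i})$ as a power series in $q$, isolating the $n=1$ term, and bounding the tail by an $O(|q|)$ quantity that is controlled uniformly on the strip via $\d\,\tr(e^{-\b_o H})$. The paper phrases this as a ``2nd order MacLaurin expansion'' and records the cruder bound $\frac{|q|}{2}\big(z\,\tr e^{-\b H}\big)^2$, whereas you track the full geometric tail $\frac{|q|t_i^2}{2(1-|q|t_i)}$ and then dominate $\sum_i g_i t_i^2$ by $\theta\sum_i g_i t_i$; the two estimates are equivalent in spirit. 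You also spell out two points the paper leaves implicit: (i) the role of the hypothesis $\d<e^{\b_o\min\s(H)}$ in producing a single ratio $\theta<1$ valid across the whole unbounded strip, and (ii) the passage from uniform convergence of the logarithms to that of the ${\ty{Z}}_q$ themselves via the Lipschitz property of $\exp$ on the bounded range $[-1,M+1]$.
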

\begin{proof}
We start by noticing that, under the limitations on the variables $\b$ and $z$, the grand partition functions are defined for all values of $q\in[-1,1]$.
It will be enough to manage the logarithm of the ${\ty{Z}}_q(\b,z)$. For such a purpose, we use the 2nd order Mc Laurin expansion,~obtaining
\begin{align*}
|\ln{\ty{Z}}_q(\b,z)&-\ln\ty{Z_0}(\b,z)|=\left|\frac{\tr\ln(I-zqe^{-\b H})}{q}+z\tr e^{-\b H}\right|\\
&\leq\frac{|q|}2\big(z\tr e^{-\b H}\big)^2\leq\frac{|q|}2\big(\d\tr e^{-\b_o H}\big)^2\,.
\end{align*}
\end{proof}
In view of applications to thermodynamics, from the grand partition function we can recover the {\it Landau potential}, called also {\it grand potential}
\begin{equation*}
\label{granpot}
\Omega_q:=-\frac{1}{\b}\ln {\ty{Z}}_q\,,\quad q\in[-1,1]\,.
\end{equation*} 

It is well known that, from a thermodynamical point of view, $\Omega_q$ can be expressed by $\Omega_q=-PV$, where $V$ is the volume occupied by a real physical system under consideration.
Therefore, we obtain
\begin{equation}
\label{pivu0}
PV=-k_B T\frac{\tr\big(\ln(I-zqe^{-\b H})\big)}{q}\,,\quad q\in[-1,1]\smallsetminus\{0\}\,,
\end{equation} 
which reduces to $PV=k_B Tz\,\tr\big(e^{-\b H}\big)$ for $q=0$. 
\smallskip

On the other hand, it is well-known that the average number $N$ of particles of such systems is
$N=-z\frac{\partial\Omega_q}{\partial z}$, which immediately leads to the well-known equation of state $PV=Nk_{\rm B}T$ for a gas of free classical particles. 

For the remaining cases $q\in[-1,1]\smallsetminus\{0\}$, it is no longer true that an equation of state can be immediately obtained. In fact, also by considering directly \eqref{occq}, we have
\begin{equation}
\label{pivu}
\begin{split}
N&=-z\frac{\partial\Omega_q}{\partial z}=\sum_{\eps\in\s(H)}n_q(\eps) \\ 
&=\sum_{\eps\in\s(H)}g(\eps)\frac{1}{z^{-1}e^{\b\eps}-q}
=\tr\Big(\big(z^{-1}e^{\b H}-q\big)^{-1}\Big)\,,
\end{split}
\end{equation}
and thus it is unclear how to compare \eqref{pivu0} with \eqref{pivu} to obtain analogous equations of state for quons, bosons and fermions included.

In the next section, we make  such a discussion for the free gas of quons living in $\br^3$, after passing to the continuum.

\section{The Free Gas of Quons}
\label{continuu}

As a simple application of the grand partition function computed in Section \ref{gpfqmm}, we determine the equation of state for the free gas of quons. In order to do that, we  first have to perform the continuum limit to handle Hamiltonians with a continuous spectrum. 

Indeed, we consider the one-particle Hamiltonian $H:=-\frac{\D}{2m}$, acting on $L^2(\br^3,\di^3x)$. It is a selfadjoint operator with
$\s(H)=\s_{\rm ac}(H)=[0,+\infty)$, where $\s_{\rm ac}$ is \emph{ the absolutely continuous spectrum}. 
In this way, the degeneracy factor $g(\eps)$, $\eps\in\s(H)$, appearing in the computation of the canonical trace of 
$\cb(\ch)$ in \eqref{pivu0} and \eqref{pivu}, is absorbed in the integration on the whole spectrum.

Summarising, the limit to the continuum is now obtained in the following standard way, as explained, e.g.,\ in the appendix of \cite{FV}:
for a gas with $N$ particles, we simply make the replacements
\begin{equation}
\label{b4}
\sum_{\eps\in\s(H)}g(\eps)=\sum_{\{n(\eps)\mid\eps\in\s(H)\}}\rightarrow
V\int\frac{\di^3{\bf p}}{h^3}\,,
\end{equation}
$h\approx 6.626070040\times 10^{-34}$ Js being the Planck constant, and $V$ the volume of the physical system under consideration.

Hence, we start with the relation \eqref{pivu0} for the grand potential ${\ty{Z}}_q$,
and to relay our result to the standard equations of state for the Fermi or Bose particles, we distinguish between the Fermi-like and the Bose-like cases. The limit to the continuum is simply achieved by applying \eqref{b4}.
\smallskip

For the Fermi-like case, i.e.,\ $-1\leq q<0$, we obtain 
\begin{equation}
\label{PcontFERMI}
\begin{split}
\frac{\dss P}{k_B T}= 
\frac{1}{|q|}\frac{4\pi}{h^3}\int_0^\infty dp\, p^2 \ln\left(1+ z|q| e^{- \beta\frac{p^2}{2m}}\right)
= \frac{1}{|q|}\frac{1}{\lambda^3} f_{\frac{5}{2}}(z|q|),
\end{split}
\end{equation}
where $\lambda=\sqrt{2\pi h^2/mk_BT}$ is the \emph{thermal wavelength},
and the function $f_{\frac{5}{2}}$ is a well-known generalisation of the $\zeta$-function, similarly to $f_{\frac{3}{2}}$,  $g_{\frac{5}{2}}$ and $g_{\frac{3}{2}}$ below.

The equation of state may be obtained considering the \emph{relative volume per particle}, i.e.,\ $v:=V/N$, with $N$ in \eqref{pivu}, passing to the continuum limit as
\begin{equation}
\label{1/vcontFERMI}
\frac{\dss 1}{v}=\frac{\dss N}{V}= 
\frac{1}{|q|}\frac{4\pi}{h^3}\int_0^\infty dp\, p^2 \frac{\dss 1}{(z|q|)^{-1}e^{\frac{\beta p^2}{2m}}+1}
= \frac{1}{|q|}\frac{1}{\lambda^3} f_{\frac{3}{2}}(z|q|)\,.
\end{equation}

Collecting together \eqref{PcontFERMI} and \eqref{1/vcontFERMI}, the equation of state of a free gas of Fermi-like quons assumes the form 
\begin{equation}
\label{eosFERMI}
\frac{\dss PV}{k_B T}=N \frac{f_{\frac{5}{2}}(z|q|)}{f_{\frac{3}{2}}(z|q|)}\,.
\end{equation}

Notice that, for $z\approx 0$, we get back to the equation of state of the classical (i.e.,\ Boltzmann) perfect gas $\frac{PV}{k_B T}=N$.

For the Bose-like case, that is $0<q\leq1$, we first note that the condensation of particles into the ground state can occur as well, see \cite{AF, FV}. 
Before passing to the continuum, it is customary to separate the part corresponding to $\eps=0$ from the remaining one, obtaining
\begin{align}
\label{PcontBOSE}
\frac{\dss P}{k_B T}&= 
-\frac{1}{q}\left(\frac{4\pi}{h^3}\int_0^\infty dp\, p^2 \ln\left(1- zq e^{- \frac{\beta p^2}{2m}}\right)
+\frac{a}{V}\ln(1-zq)\right)\\ \nn
&=\frac{1}{q}\left(\frac{1}{\lambda^3} g_{\frac{5}{2}}(zq)-\frac{a}{V}\ln(1-zq)\right)\,,
\end{align}
\begin{align}
\label{1/vcontBOSE}
\frac{\dss 1}{v}&=
\frac{1}{q}\left(\frac{4\pi}{h^3}\int_0^\infty dp\, p^2 \frac{\dss 1}{(zq)^{-1}e^{\frac{\beta p^2}{2m}}-1} + \frac{a}{V} \frac{zq}{1-zq}\right)\\ \nn
&= \frac{1}{q}\left(\frac{1}{\lambda^3} g_\frac{3}{2}(zq)+ \frac{a}{V} \frac{zq}{1-zq}\right)\,.
\end{align}

In the condensation regime, we deal with a multi-phase situation, and thus the portion of the condensate, possibly also vanishing, can vary according to many a priori constrains, such as the boundary conditions used to reach the thermodynamic limit, see e.g.,\ \cite{BR}, \mbox{Section 5.2.5}, and \cite{F4}.
Therefore, in \eqref{PcontBOSE} and consequently in \eqref{1/vcontBOSE}, we introduced the dimensionless multiplicative constant 
$a\geq0$.

It is worth noticing that, when the portion of the condensate in the system vanishes, that is $a=0$, we can obtain the equation of state similar to the 
Fermi-like case \eqref{eosFERMI}. Indeed, collecting together  \eqref{PcontBOSE} and \eqref{1/vcontBOSE} with $a=0$, we obtain
$$
\frac{\dss PV}{k_B T}=N \frac{g_{\frac{5}{2}}(zq)}{g_{\frac{3}{2}}(zq)}\,,
$$
which again reduces to the equation of state for the Boltzmann free gas for small fugacity.

We end by computing the critical density for Bose-like quons as in \cite{Hu}, Equation~(12.51), obtaining in \eqref{1/vcontBOSE} again with $a=0$,
\begin{equation}
\label{denscrit}
\r_{\rm c}(q)=\frac{1}{v_{\rm c}(q)}=\lim_{z\uparrow(1/q)}\frac{g_\frac{3}{2}(zq)}{q\l^3}=\frac{1}{q}\frac{g_\frac{3}{2}(1)}{\l^3}=\frac{\r_{\rm c}(1)}{q}\,,
\end{equation}
which coincides with (5.6) in \cite{AF}.

\section{Conclusions}

It is well known that the main ingredient to deal with the thermodynamics of macroscopic systems in terms of statistical mechanics is the grand partition function, which in the Fermi/Bose cases can be computed by the standard techniques of second quantisation. 

As explained through the present paper, such a standard method fails, even in the case of the free gas of classical particles, 
that is obeying to the Boltzmann statistics. However, in the Boltzmann case it will be enough to take into account the Gibbs correction factor $1/n!$ to count in the right way the computations involving the level of $n$-particles in the full Fock~space. 

For the remaining cases $q\in(-1,0)\bigcup(0,1)$, there is no reasonable indication to establish a similar ansatz. In other words, it is totally unknown how  the reasonable statistics for the $q$-particles should be
(apart some attempts toward this direction have been carried out by \cite{Fiv,St}), 
which produce a reasonable thermodynamical behaviour for such exotic systems, supposing that these really exist in nature.

On the reverse side, just supposing that the quons might play some role for some application in quantum physics, and also for example in quantum information and quantum computing, several attempts have been made to produce a decent thermodynamics for such particles, sometimes also reaching  paradoxical conclusions.

In the present paper, we have computed the, unknown up to now, grand partition function for all such $q$-particles, by finding out those of Fermi and Bose particles, and the correct function for the Boltzmann situation, as particular cases. 

As a simple application, we discussed the corresponding equations of the state, showing that all those equations reduce themselves to that of the classical gas $PV=Nk_B T$ for small fugacity $z\approx 0$. We also briefly discuss the condensation phenomena for Bose-like quons (i.e.,\ $q\in(0,1]$) whose appearance is rigorously proved in \cite{AF}, by finding the formula for the critical density $\r_c(q)$ already obtained in the just mentioned paper with different~methods.

It is then evident that the thermodynamics of quons can now be carried out, once having computed the corresponding grand partition function, see e.g.,\ \cite{Hu, LL}. In particular, it would be desirable to investigate the thermodynamical properties enjoyed by Fermi-like quons (i.e.,\ $q\in[-1,0)$), which are completely different from those of Bose-like quons. This detailed analysis is out of the scope of the present paper and is left out for the interested~readers.

However, as shown in \cite{AF,FV} in a rigorous way, the Bose-like quons share with bosons the phenomenon of the condensation into the fundamental state. In view of theoretical and concrete application of this interesting fact, it might be of interest to extend the investigation of quons living in inhomogeneous networks following the lines \mbox{of \cite{F1, F3, F4}}. It is also of interest to investigate the magnetic properties of systems of quons  on lattices analogously to the celebrated Ising and similar models, including the disordered ones, see e.g.,\ \cite{BF0, BF, FM} and the literature cited therein. We leave out these interesting arguments, postponing these for future investigation.

In view of possible applications, another direction is to start to investigate the thermodynamics of other systems enjoying exotic commutation relations. Among them, we cite the Boolean and monotone ones, see e.g.,\ \cite{CFL} and the literature cited therein. 

Concerning these last two cases, the unique available method to compute the grand partition function is the second quantisation one, involving the relative Fock space construction. We briefly discuss the simpler Boolean case, leaving the monotone one for a forthcoming work. 

Indeed, for the Boolean case, the creators and annihilators satisfy the \mbox{commutation~relations}
\begin{equation}
\label{boole}
{\bf b}(f){\bf b}^\dagger(g)+\langle g,f\rangle \sum_{j\in J}{\bf b}^\dagger(e_j){\bf b}(e_j)=\langle g,f\rangle I_\ch\,, \quad f,g\in\ch\,.
\end{equation}
Here, $\{e_j\mid j\in J\}\subset\ch$, ${\rm card}(J)$ coinciding with the Hilbertian dimension $\dim(\ch)$ of $\ch$, is any orthonormal basis of $\ch$, and the possibly infinite sum in \eqref{boole} is meant as in \cite{CFL}, Proposition 3.2.

The associate Boolean Fock space is given as $\gf_{\rm boole}(\ch)=\bc\,\Om\oplus\ch$, $\Om$ being the unit vacuum vector, and thus we immediately compute the associated grand partition function $Z_{\rm boole}=1+z\tr\,e^{-\b H}$.

Concerning the average occupation numbers, we get 
$$
n(\eps)=g(\eps)\frac{ze^{-\b\eps}}{1+z\tr\,e^{-\b H}}\leq N=\frac{z\tr\,e^{-\b H}}{1+z\tr\,e^{-\b H}}<1\,,
$$
where the 1st inequality is always sharp, but in the trivial case $\dim(\ch)=1$.

The explanation of the physical interpretation of this result can be found in \cite{VW}, in which the Boolean statistics was first introduced in relation to the application to quantum~optics.
\vspace{6pt}

%
%
%
%

\noindent
\textbf{Acknowledgments}.
The authors acknowledge MIUR Excellence Department Project awarded to the
Department of Mathematics, University of Rome ``Tor Vergata'', CUP
E83C18000100006, and Italian INdAM-GNAMPA. 
The first-named author is partially supported by MIUR-FARE R16X5RB55W QUEST-NET and by the University of Rome
``Tor Vergata'', funding scheme ``Beyond Borders'', CUP E84I19002200005.







%

\vspace{1,5cm}

\end{document}